\documentclass{llncs}
\usepackage{llncsdoc}


\bibliographystyle{abbrv}    
\newcommand{\commentOut}[1]{}
\usepackage{amssymb, amsmath, amsfonts}
\usepackage{algorithm}
\usepackage{algpseudocode}

\newcommand{\dividerline}{\begin{center}\hrule\end{center}}

\newcommand{\reduce}{\texttt{reduce}}

\newcommand{\healthy}{\texttt{healthy}}
\newcommand{\cured}{\texttt{cured}}
\newcommand{\faulty}{\texttt{faulty}}

\newcommand{\M}{\texttt{max\_state}}
\newcommand{\m}{\texttt{min\_state}}

\newcommand{\num}{\texttt{nTrim}}

\newcommand{\fillbox}{\hspace*{\fill}\(\Box\)}

\begin{document}
\markboth{\LaTeXe{} Class for Lecture Notes in Computer
Science}{\LaTeXe{} Class for Lecture Notes in Computer Science}
\thispagestyle{empty}

\title{An Improved Approximate Consensus Algorithm \\in the Presence of Mobile Faults}

\author{\normalsize Lewis Tseng}
\institute{\normalsize Computer Science\\
	\normalsize Boston College\\~\\ 
	\normalsize Email: lewis.tseng@bc.edu}

\date{}
\maketitle

~

\begin{abstract}
	\normalsize
	This paper explores the problem of reaching approximate consensus in synchronous point-to-point networks, where each pair of nodes is able to communicate with each other directly and reliably. We consider the \textit{mobile Byzantine fault model} proposed by Garay '94 -- in the model, an omniscient adversary can corrupt up to $f$ nodes in each round, and at the beginning of each round, faults may ``move'' in the system (i.e., different sets of nodes may become faulty in different rounds). Recent work by Bonomi et al. '16 proposed a simple iterative approximate consensus algorithm which requires at least $4f+1$ nodes. This paper proposes a novel technique of using ``confession'' (a mechanism to allow others to ignore past behavior) and a variant of reliable broadcast to improve the fault-tolerance level. In particular, we present an approximate consensus algorithm that requires only $\lceil 7f/2\rceil + 1$ nodes, an $\lfloor f/2 \rfloor$ improvement over the state-of-the-art algorithms. Moreover, we also show that the proposed algorithm is \textit{optimal} within a family of \textit{round-based algorithms}.
\end{abstract}

\newpage

\setcounter{page}{1}

\section{Introduction}

Fault-tolerant {\em consensus} has received significant attentions over the past three decades since the seminal work by Lamport, Shostak, and Pease \cite{lamport_agreement2}. 
Recently, a new type of fault model -- \textit{mobile fault model} -- has been proposed to address  the needs in emerging areas such as mobile robot systems, sensor networks, and smart phones \cite{mobile_yung}. The mobile fault model (in the round-based computation systems) has the following two characteristics:

\begin{itemize}
	\item Up to $f$ nodes may become faulty in a given round, and
	\item Different sets of nodes may become faulty in different rounds.
\end{itemize}
This type of fault model is very different from the traditional ``fixed'' fault model \cite{lamport_agreement2,AA_nancy,welch_book} -- once a node becomes faulty, it remains faulty throughout the lifetime of the computation. 

The mobile fault model is motivated by the observation that for long-living computations, e.g., aggregation, leader election, and clock synchronization, nodes may experience different phases throughout the lifetime such as cured/curing, healthy, and faulty phases \cite{mobile_yung}. For example, a worm-type of malware may gradually infect and corrupt healthy nodes while some infected nodes detected the malware and became cured (e.g., by routine checks from administrators) \cite{mobile_yung}. Another example is that fragile sensor nodes or robots may be impacted by the environment change, e.g., sensor malfunction due to high wind \cite{mobile_bonomi}. 

A rich set of mobile Byzantine fault models has been proposed \cite{mobile_Bonnet2014,mobile_buhrman,mobile_garay,mobile_Sasaki2013}, and subsequent work addressed the consensus problem in these models, e.g., \cite{mobile_Banu2012,mobile_bonomi,mobile_bonomi2}. These models are all defined over the \textit{round-based computation} system (to be formally defined in Section \ref{s:model}), and they differ in two main dimensions \cite{mobile_bonomi,mobile_bonomi2}: (i) at which point in a round, faults can ``move'' to other nodes? and (ii) does a node have a knowledge when it is cured (i.e., after a fault moves to another node)? In this paper, we adopt the model proposed by Garay \cite{mobile_garay}:


\begin{itemize}
	\item At the beginning of round $t$, the Byzantine adversary picks the set of up to $f$ nodes that behave faulty in round $t$, and
	\item Once a node is \textit{cured} (i.e., the node that was faulty in the previous round, and becomes fault-free in the current round), it is aware of the condition and can remain silent to avoid disseminating faulty information.
\end{itemize}

Recently, in Garay's model, Banu et al. \cite{mobile_Banu2012} proposed an exact Byzantine consensus algorithm for at least $4f+1$ nodes, and Bonomi et al. \cite{mobile_bonomi,mobile_bonomi2} proposed an iterative approximate Byzantine consensus algorithm for at least $4f+1$ nodes. Bonomi et al. also proved that for a constrained class of \textit{memory-less} algorithms, their iterative algorithm is optimal. In this paper, we show that $4f+1$ is \underline{not tight} for a more general class of algorithms. In particular, we present an approximate consensus algorithm that requires only $\lceil 7f/2\rceil + 1$ nodes.

\subsubsection{Mobile Faults and Round-based Algorithms}

The mobile Byzantine fault model considered in this paper is defined over round-based algorithms, in which the system proceeds in synchronous rounds that consist of three steps: \textit{send, receive, compute} \cite{mobile_bonomi,mobile_bonomi2,mobile_garay}. There are three types of nodes in the system: \textit{faulty, healthy,} and \textit{cured}. For a slight abuse of terminology, we also call \textit{healthy} and \textit{cured} nodes as\textit{ fault-free} nodes. 
In the round-based algorithms, each {\em fault-free} node maintains a special state variable $v$. After a sufficient number of rounds, the state variable $v$ can be viewed as the \textit{output} of the fault-free nodes.\footnote{Using the technique from \cite{abraham_04_3t+1_async}, nodes can also estimate the number of required rounds and decide when to ``output'' the state variable $v$.} With mobile faults, each node may become Byzantine faulty and have its local storage (including the state variable and other bookkeeping variables) corrupted in any round. When a node is cured, it needs to recover its state variable and potentially other information. Therefore,  for a given round, we are only interested in the state variable $v$ at the \textit{healthy} nodes, since if majority of nodes remain healthy, \textit{cured} nodes can easily learn a fault-free state variable from other nodes. 

\subsubsection{Approximate Consensus}

Approximate consensus can be related to many distributed computations in emerging areas, such as data aggregation \cite{Kempe_gossip}, decentralized estimation  \cite{noisy_link}, and flocking \cite{jadbabaie_concensus}; hence, the problem of reaching approximate consensus in the presence of Byzantine faults has been studied extensively, including synchronous systems \cite{AA_Dolev_1986}, asynchronous systems \cite{abraham_04_3t+1_async}, arbitrary networks \cite{vaidya_PODC12}, transient link faults \cite{Tseng_netys14}, and time-varying networks \cite{aa_acc} $\cdots$ etc. Bonomi et al. \cite{mobile_bonomi,mobile_bonomi2} are among the first to study approximate consensus algorithms in the presence of mobile Byzantine fault models.

Roughly speaking, the round-based approximate consensus algorithms of interest have the properties below, which we will define formally in Section \ref{s:model}:

\begin{itemize}
	\item \textit{Initial state} of each node is equal to a real-valued input provided to that node.
	\item {\em Validity}: after each round of an algorithm, the state variable $v$ of each healthy node must remain in the range of the initial values of fault-free nodes.
	\item {\em Convergence}: for $\epsilon > 0$, after a sufficiently large number of rounds, the state variable of the healthy nodes are guaranteed to be within $\epsilon$ of each other.
\end{itemize}

\subsubsection{Main Contribution}

\begin{itemize}
	\item  We propose an approximate consensus algorithm that requires only $\lceil 7f/2\rceil + 1$ nodes. The algorithm relies on ``confession'' (a mechanism to ask others to ignore past behavior) and a variant of reliable broadcast (to learn  information from other \textit{healthy} nodes reliably). The technique may be applied to other problems under the mobile fault models.
	
	\item We show that the proposed algorithm is \textit{optimal} within a family of round-based algorithms, i.e., which only allows nodes to ``remember'' what happened in the previous rounds (but not the entire execution history).
\end{itemize}

\section{Related Work}

There is a rich literature on consensus-related problems \cite{AA_nancy,welch_book}. Here, we only discuss two most relevant categories.

\subsubsection{Exact Consensus under Mobile Byzantine Faults}

References \cite{mobile_Bonnet2014,mobile_buhrman,mobile_garay,mobile_Sasaki2013,mobile_Banu2012} studied the problem of reaching {\em exact} consensus under different mobile Byzantine fault models. In exact consensus algorithms, every fault-free node reaches exactly the \textit{same} output. Garay is among the first to study mobile faults \cite{mobile_garay}. In his model, the faults can ``move'' freely, and the cured nodes are aware of its condition. Garay proposed an algorithm requiring $6f+1$ nodes \cite{mobile_garay}. Later, Banu et al. \cite{mobile_Banu2012} improved the fault-tolerance level to $4f+1$ nodes. References \cite{mobile_Bonnet2014,mobile_Sasaki2013} considered a mobile fault model in which nodes are \textit{not} aware when they are cured. Sasaki et al. \cite{mobile_Sasaki2013} presented an algorithm requiring at least $6f+1$ nodes, whereas, Bonnet et al. \cite{mobile_Bonnet2014} proposed an algorithm requiring at least $5f+1$ nodes, and proved that $5f+1$ is tight in their fault model. Reference \cite{mobile_buhrman} also assumed that the nodes has the knowledge when it is cured; however, the ability of the adversary is more \textit{constrained} than the above models. The adversary \textit{cannot} choose an arbitrary set of nodes to be faulty, i.e., the faults can only ``move'' with message dissemination. Buhrman et al. \cite{mobile_buhrman} presented an optimal algorithm that requires $3f+1$ nodes. References \cite{mobile_Bonnet2014,mobile_buhrman,mobile_garay,mobile_Sasaki2013,mobile_Banu2012} considered only exact consensus; hence, the techniques are very different from the one in this paper. Moreover, to the best of our knowledge, we are the first to show that (approximate) consensus is solvable with only $\lceil 7f/2 \rceil+1$ nodes in Garay's model.



\subsubsection{Approximate Consensus}

Approximate consensus can be related to many distributed computations in networked systems, e.g., \cite{Kempe_gossip,noisy_link,jadbabaie_concensus}. Since many networked systems are tend to be fragile, the problem of reaching approximate consensus in the presence of Byzantine faults has been studied extensively. Most work assumed the ``fixed'' fault model; that is, once the Byzantine adversary picks a faulty node, then throughout the execution of the algorithm, the node remains faulty and will not be cured. Dolev et al. studied the problem in both synchronous and asynchronous systems \cite{AA_Dolev_1986}. Dolev et al. proposed an optimal synchronous algorithm, but the asynchronous one requires at least $5f+1$ nodes, which is only optimal within the family of \textit{iterative} algorithms.
Later, Abraham et al. proposed an optimal asynchronous algorithm that requires only $3f+1$ nodes \cite{abraham_04_3t+1_async}, which is optimal for all \textit{general} algorithms. The technique in this paper is inspired by the usage of ``witness'' and reliable broadcast in \cite{abraham_04_3t+1_async}; however, due to different synchrony assumptions and fault models, our technique differs from the ones in \cite{abraham_04_3t+1_async} (we will address more details in Sections \ref{s:alg} and \ref{s:key_lemma}). 

Kieckhafer and Azadmanesh studied the behavior of iterative algorithms (i.e., \textit{memory-less} algorithms) and proved some lower bounds under Mixed-Mode faults model, where nodes may suffer crash, omission, symmetric, and/or asymmetric Byzantine failures \cite{mixed_Kieckhafer}. Researchers also studied iterative approximate consensus under different communication assumptions, including arbitrary communication networks \cite{vaidya_PODC12}, networks with transient link faults \cite{Tseng_netys14}, and time-varying networks \cite{aa_acc} $\cdots$ etc. These works only assumed fixed fault model.

Bonomi et al. \cite{mobile_bonomi,mobile_bonomi2} are among the first to study approximate consensus algorithms in the presence of mobile Byzantine fault models. They presented optimal \textit{iterative} algorithms under different mobile fault models, and they proposed a mapping (or reduction) from the existing mobile Byzantine models to the Mixed-Mode faults model \cite{mixed_Kieckhafer}. As we will show later in this paper, the bound does not hold for a more general class of algorithms. In other words, the ``memory'' from previous rounds helps improve the fault-tolerance level. This paper essentially demonstrates how to use the ``memory'' effectively.

\section{Preliminary}

\subsection{Models and Round-based Algorithms}
\label{s:model}

\subsubsection{System Model}

We consider a synchronous message-passing system of $n$ nodes. The communication is through a point-to-point network, in which each pair of nodes is connected by a direct communication link. All the links are assumed to be reliable, and the messages \textit{cannot} be forged by the adversary. We assume that $n \geq \lceil 7f/2 \rceil + 1$, where $f$ is the upper bound on the number of faulty nodes in a given round. 

\subsubsection{Round-based Algorithms}

As in the prior work \cite{mobile_garay,mobile_Banu2012,mobile_bonomi,mobile_bonomi2}, we consider the round-based algorithms in this paper. The algorithm consists of three steps:

\begin{itemize}
	\item {\em Send}: send \textit{one} message to all other nodes
	\item {\em Receive}: receive the messages from other nodes
	\item {\em Compute}: based on the messages and local states, perform local computation
\end{itemize}

In addition, each node also maintains a special state variable $v$ such that after a sufficient number of rounds, the state becomes the \textit{output} at the node. Note that Bonomi et al. only considered iterative algorithms \cite{mobile_bonomi,mobile_bonomi2}, in which each node only sends and keeps a real-value state at all time, and there is no other information maintained (i.e., \textit{memory-less} algorithms or \textit{iterative} algorithms), whereas, we and references \cite{mobile_garay,mobile_Banu2012} consider a more general types of algorithms, where nodes may send and keep arbitrary state information.

\subsubsection{Mobile Byzantine Fault Model}

In this paper, we consider mobile Byzantine fault model proposed by Garay \cite{mobile_garay}. There are three types of nodes:

\begin{itemize}
	\item {\em Byzantine nodes}: in the beginning of each round, up to $f$ nodes may become Byzantine faulty. A Byzantine faulty node may misbehave arbitrarily, and the local storage may be corrupted. Possible misbehavior includes sending incorrect and mismatching (or inconsistent) messages to different nodes. We consider an omniscient adversary -- a single adversary that control which set of nodes would become faulty. Moreover, the Byzantine adversary is assumed to have a complete knowledge of the execution of the algorithm, including the states of all the nodes, contents of messages the other nodes send to each other, and the algorithm specification.
	
	~
	
	\item {\em Cured nodes}: a node is ``cured'' in the current round if it was faulty in the previous round, and becomes fault-free in the beginning of the current round. Under the model, a cured node has the knowledge that it just got cured, and hence can choose to stay silent at the current round, since the local states are potentially corrupted. A cured node follows the algorithm specification -- it receives messages and performs local computation accordingly.
	
	~
	
	\item {\em Healthy nodes}: all the other nodes belong to the set of healthy nodes. Particularly, they follow the algorithm specification, and the local storage is \underline{not} corrupted in the past and current rounds.
\end{itemize}

\subsection{Notation}

\paragraph{Nodes:}
To facilitate the discussion, we introduce the following notations to represent sets of nodes throughout the paper:

\begin{itemize}
	\item $\faulty[t]$: the set of nodes that are faulty in round $t$
	\item $\cured[t]$: the set of nodes that are cured in round $t$
	\item $\healthy[t]$: the set of nodes that are healthy in round $t$		
\end{itemize}
Nodes in $\healthy[t] \cup \cured[t]$ are said to be {\em fault-free} in round $t$.

\paragraph{Values:}
Given a given round $t$, let us define $v[t], \M[t]$ and $\m[t]$:

\begin{itemize}
	\item $v_i[t]$ is the special state variable (that later will be the output) maintained at node $i$ \underline{in the end of round $t$}. Notation $v_i[0]$ is assumed to be the input given to node $i$. For brevity, when the round index or node index is obvious from the context, we will often ignore $t$ or $i$.
	
	~
	
	\item $\M[t] = \max_{i \in \healthy[t] \cup \cured[t]} v_i[t]$. Notation $\M[t]$ is the largest state variable among the fault-free nodes at the end of round $t$. Since the initial state of each node is equal to its input, $\M[t]$ is equal to the maximum value of the initial input at the fault-free nodes.
	
	~
	
	\item $\m[t] = \min_{i \in \healthy[t] \cup \cured[t]} v_i[t]$. Notation $\m[t]$ is the smallest state variable among the fault-free nodes at the end of round $t$. Since the initial state of each node is equal to its input, $\m[t]$ is equal to the minimum value of the initial input at the fault-free nodes.	
\end{itemize}

\subsection{Correctness of Round-based Approximate Algorithms}

We are now ready to formally state the \textit{correctness condition} of round-based approximate algorithms under the mobile Byzantine fault model:

\begin{itemize}
	\item {\em Validity}: $\forall t > 0,$ 
	
	\[
	\m[t] \geq \m[0]~~~\text{and}~~~\M[t] \leq \M[0]
	\]

	~
	
	\item {\em Convergence}: for a given constant $\epsilon$, there exists an $t$ such that 
	\[
	\M[t] - \m[t] < \epsilon
	\]
\end{itemize}

\section{Algorithm CC}
\label{s:alg}

We now present \textit{Algorithm CC (Consensus using Confession)}, a round-based approximate algorithm. Throughout the execution of the algorithm, each node $i$ maintains a special state variable $v_i$. Recall that $v_i[t]$ represents the state at node $i$ \underline{in the end of round $t$} (i.e., after the state variable is updated). The convergence condition requires the state variables $v_i[t]$ to converge for a large enough $t$. 

Similar to the algorithms in \cite{mobile_garay,abraham_04_3t+1_async}, Algorithm CC proceeds in \textit{phases}. There are two phases in the algorithm: in the first phase (\textit{Collection Phase}), nodes exchange their state variables $v$ and construct a vector $E$ that stores others' state variables. $E_i[j]$ represents the value that $i$ receives from $j$. If $j$ is faulty or cured, $E_i[j]$ may not be the state variable at node $j$. The second phase (\textit{Confession Phase}) has three functionalities:

\begin{itemize}
	\item Exchange the vector $E$ constructed in the \textit{Collection Phase}. If a node $i$ is cured in the beginning of this phase (round $t+1$), then it sends $\emptyset$ to ``confess'' to all other fault-free nodes that it was faulty and subsequently, fault-free nodes will ignore messages from node $i$ from the previous round. If node $i$ is faulty, it may choose to send confession to only a subset of nodes; however, as long as we have enough redundancy, such misbehavior can be tolerated.
	
	~
	
	\item Construct a vector $V$ of ``trustworthy'' state variables. $V_i[j]$ represents the value what $i$ believe is $v_j[t-1]$, the state variable at node $j$ in the end of round $t-1$. A value $u$ from node $j$ is ``trustworthy'' if node $j$ does not confess (\textit{Condition 2} below), and enough nodes confess or ``endorse'' the value $u$  (\textit{Condition 1} below). Node $k$ is said to endorse the value $u$ if node $k$ does not confess, sends legit message, and has $E_k[j] = u$. Node $k$ may or may not be healthy.
	
	~
	
	\item Update the local state variable using the $\reduce$ function on the vector $V$. The $\reduce$ function is designed to trim enough values from $V$ so that none of the extreme values proposed by faulty nodes is used.
\end{itemize}

\subsection{Algorithm Specification}

\dividerline
\textbf{Algorithm CC: Steps to be executed by node $i$ in round $t$ for $t \geq 0$}
\dividerline
\begin{itemize}
	\item {\bf Round $t$}: \hfill {\em +++ Collection Phase +++}
	\begin{itemize}
		\item {\em Send}:
		
		~~~if \textit{$i$ is cured},
		
		~~~~~~~~send $(\perp, i)$
		
		~~~otherwise, send $(v_i[t-1],i)$\vspace*{5pt}
		
		\item {\em Receive}:\footnote{If nothing is received from $j$, then $u$ is assumed to be $\perp$, a null value. Also, we assume that a node can send a message to itself.} 
		
		~~~receive $(u,j)$ from node $j$ \vspace*{5pt}
		
		\item {\em Compute}: 
		
		\begin{itemize}
			\item $E_i[j] \gets u$
			
			\item if \textit{$i$ is healthy},
			
			~~~~$v_i[t] \gets v_i[t-1]$
			
		\end{itemize}
	\end{itemize}
	
	~
	
	\item {\bf Round $t+1$}: \hfill {\em +++ Confession Phase +++}
	\begin{itemize}
		\item {\em Send}: 
		
		~~~if \textit{$i$ is cured}, 
		
		~~~~~~~~send $(\emptyset,i)$\hfill \textit{$\slash \slash$Comment: ``confess'' faulty behavior}
		
		~~~otherwise, send $(E_i,i)$ \vspace*{5pt}
		
		\item {\em Receive}:
		
		~~~if a legitimate tuple $(E_j,j)$ is received from node $j$,\footnote{Here, $E_j = \emptyset$ is legitimate.}
		
		\[
		R_i[j] \gets E_j
		\]

		\item {\em Compute}: 
		
		\begin{itemize}
			\item if the following two conditions are satisfied:
			
			\begin{itemize}
				\item {\em Condition 1}: ~~~~~there are $\geq n-f$ distinct nodes $k$ such that (i) $R_i[k] = E_k \neq \emptyset$ and $E_k[j] = u$, or (ii) $R_i[k] = \emptyset$ \vspace*{5pt}
				\item {\em Condition 2}: ~~~~~$R_i[j] \neq \emptyset$
			\end{itemize}
			
			~
			
			then \hfill {\em $\slash \slash$Comment: $u$ is ``trustworthy''}
			
			\[
			V_i[j] \gets u
			\]
			
			otherwise,
			
			\[
			V_i[j] \gets \perp
			\]
			
			\item update state variable as follows:
			
			\[
			v_i[t+1] \gets \reduce(V_i)
			\]
		\end{itemize}
	\end{itemize}		
\end{itemize}

\dividerline

\subsection{Reduce Function}

Reduce function is widely used in iterative approximate Byzantine consensus algorithms, e.g., \cite{abraham_04_3t+1_async,mobile_bonomi,mobile_bonomi2,mixed_Kieckhafer,AA_Dolev_1986}. We adopt the same structure: order the values, trim potentially faulty values, and update local state.
Different from the prior work, our $\reduce$ function trims different number of values at each round. The exact number depends on the number of $\perp$ values received. A $\perp$ value may be a result of faulty behavior or a confession. Below, we define the number of values to be trimmed.

\begin{definition}
	\label{def:num}
	Suppose that node $i$ receives $x$ $\perp$ values in the vector $V_i$ at round $t+1$. Then, define
	
	\[
	\num_i = 
	\begin{cases}
	f, & \text{if}\ x \leq f \\
	\lceil f-\frac{x-f}{2} \rceil, & \text{otherwise}\
	\end{cases}
	\]
\end{definition}

The value $\num$ counts the number of potentially faulty values in the vector $V_i$. In general, the more confessions that $i$ sees in $V_i$, the less faulty values are in $V_i$. Lemma \ref{lemma:num} formally shows that $\num_i$ is large enough to trim all the extreme values proposed by faulty nodes. Now, we present our $\reduce$ function below: 


\dividerline
\textbf{Reduce function: $\reduce(V_i)$ at node $i$}
\dividerline
\begin{itemize}
	\item Calculate $\num_i$ as per Definition \ref{def:num}.\vspace*{3pt}
	\item Remove all $\perp$ values in $V_i$. Denote the new vector by $V_i'$.\vspace*{3pt}
	\item Order $V_i'$ in a non-decreasing order. Denote the ordered vector by $O_i$.\vspace*{3pt}
	\item Trim the bottom $\num_i$ and the top $\num_i$ values in $O_i$. In other words, generate a new vector containing the values $O_i[\num_i+1], O_i[\num_i+2], \cdots, O_i[|O_i|-\num_i-1]$. Denote the trimmed vector by $O_i^t$.\vspace*{3pt}
	\item Return
	
	\begin{equation}
	\label{eq:reduce}
	\frac{\min(O_i^t) + \max(O_i^t)}{2}
	\end{equation}
\end{itemize}
\dividerline

\section{Analysis}

\subsection{Key Properties of $V$}
\label{s:key_lemma}

Before the $\reduce$ function is executed, the vector $V$ at all fault-free nodes satisfies nice properties as stated in the lemmas below. The first four lemmas (Integrity I-IV) show that Algorithm CC achieves properties similar to reliable broadcast \cite{abraham_04_3t+1_async} -- all fault-free nodes are able to see identical values in $V$ if the sender node is either healthy or cured. Reliable broadcast in \cite{abraham_04_3t+1_async} also guarantees \textit{Uniqueness} -- if the value sent from a node is not $\perp$, then the value appears identically in all fault-free node's $V$ vector. However, the $V$ vectors in Algorithm CC may still contain faulty values, since a faulty node that just moved in round $t+1$ can send different $E$ vectors to different fault-free nodes to ``endorse'' different values. This is the main reason that why Algorithm CC requires more than $3f+1$ nodes. In the proofs below, we will often denote $v_i[t-1]$ by $v$ for brevity. The indices should be clear from the context.

\begin{lemma} {\bf (Integrity I)}
	\label{lemma:integrityI}
	If node $i$ is healthy in both rounds $t$ and $t+1$, then for all fault-free $j \in \healthy[t+1] \cup \cured[t+1]$, $V_j[i] = v_i[t-1]$, the value sent by node $i$ in round $t$.
\end{lemma}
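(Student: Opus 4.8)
The plan is to fix an arbitrary fault-free receiver $j \in \healthy[t+1] \cup \cured[t+1]$, abbreviate $v = v_i[t-1]$, and show that $v$ is the \emph{unique} value for which $j$'s two acceptance conditions both fire, so that the input entry $V_j[i]$ fed to \reduce{} is forced to equal $v$. The starting observations are immediate from the code: since $i$ is healthy in round $t$ it is not cured, hence it sends $(v,i)$ to every node in the Collection Phase; and since $i$ is healthy in round $t+1$ it sends its genuine vector $(E_i,i)$ with $E_i \neq \emptyset$ in the Confession Phase. Links are reliable and messages cannot be forged, and the evaluation of $V_j[i]$ depends only on the freshly received round-$(t+1)$ vector $R_j$ (not on any possibly-stale local state), so even a cured $j$ computes it correctly. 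In particular $R_j[i] = E_i \neq \emptyset$, so Condition~2 is satisfied.

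For the existence half — that $u=v$ clears Condition~1 — I would classify every node $k$ by its status in round $t+1$. First, each $k \in \healthy[t+1]$ contributes via case (i): being healthy in round $t+1$ means $k$'s storage was uncorrupted in round $t$ as well, so $k$ was fault-free in round $t$, received $(v,i)$ from $i$, and recorded $E_k[i] = v$; moreover $k$ forwards its true $(E_k,k)$ with $E_k \neq \emptyset$, so $R_j[k] = E_k \neq \emptyset$ and $E_k[i] = v$. Second, each $k \in \cured[t+1]$ confesses by sending $\emptyset$, giving $R_j[k] = \emptyset$, which is case (ii). Hence at least $|\healthy[t+1]| + |\cured[t+1]| = n - |\faulty[t+1]| \geq n - f$ distinct witnesses support $u=v$, so Condition~1 holds for $v$.

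The more delicate half is uniqueness: no $u' \neq v$ may also clear Condition~1, which is exactly what pins $V_j[i]$ to $v$ rather than to some value injected by a faulty node. Here the honest healthy nodes work in our favor: they all recorded $E_k[i] = v \neq u'$, so none of them endorses $u'$ through case (i). The only nodes that can support $u'$ are the confessors (case (ii)) and the round-$(t+1)$ faulty nodes, so the count for $u'$ is at most $|\cured[t+1]| + |\faulty[t+1]|$. Bounding $|\cured[t+1]| \leq |\faulty[t]| \leq f$ (a cured node was faulty in the previous round) and $|\faulty[t+1]| \leq f$ yields a count of at most $2f$, which is strictly below $n-f$ precisely because $n \geq \lceil 7f/2\rceil + 1 > 3f$. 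Combining existence with uniqueness gives $V_j[i] = v = v_i[t-1]$.

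I expect the main obstacle to be the cross-round bookkeeping driving the uniqueness count — carefully justifying that $\cured[t+1] \subseteq \faulty[t]$ (so confessors are charged against the \emph{previous} round's fault budget) and that a node healthy in round $t+1$ could not have had a corrupted $E$ entry in round $t$. Once those status-tracking facts are secured, the redundancy inequality $2f < n-f$ finishes the argument, and this is the first point at which the $\lceil 7f/2\rceil + 1$ threshold (rather than the weaker $3f+1$) is actually exercised.
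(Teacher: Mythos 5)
Your proof is correct, and its existence half is essentially the paper's argument: both of you count all $n-|\faulty[t+1]| \geq n-f$ fault-free nodes of round $t+1$ as witnesses for $v$, splitting them into endorsers (healthy, hence fault-free in round $t$ and holding $E_k[i]=v$) and confessors (cured, hence $R_j[k]=\emptyset$), and both invoke $i$'s health in round $t+1$ for Condition~2. Where you go beyond the paper is the uniqueness half: the paper's proof of Integrity~I stops after showing the conditions fire for $u=v$ and never rules out a second value $u'\neq v$ also clearing Condition~1 (which would make the assignment $V_j[i]\gets u$ ambiguous); you close that gap by observing that only nodes in $\cured[t+1]\cup\faulty[t+1]$, at most $2f<n-f$ of them, can support $u'$. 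This is exactly the counting the paper deploys separately in the proof of Lemma~\ref{lemma:integrityII}, so your version effectively folds Integrity~II's argument into Integrity~I and is the more self-contained of the two. One small correction to your closing remark: the inequality $2f<n-f$ only requires $n\geq 3f+1$, so this lemma does not yet exercise the $\lceil 7f/2\rceil+1$ threshold; that stronger bound is first genuinely needed in Lemma~\ref{lemma:limit}, where faulty nodes that stay faulty across both rounds can make $V$ vectors disagree.
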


\begin{proof}
	
	Fix a node $i \in \healthy[t] \cap \healthy[t+1]$ which sends the value $v_i[t-1]$ in round $t$.
	In the receive step of round $t$, each node $k \in \healthy[t] \cup \cured[t]$ receives the value and has $E_k[i] = v$. By definition, $|\healthy[t] \cup \cured[t]| \geq n-f$. 
	Suppose in the beginning of round $t+1$, $b \leq f$ of the mobile Byzantine faults move to the nodes in $\healthy[t] \cup \cured[t]$.
	Then, observe that 
	
	\begin{itemize}
		\item $|\healthy[t] \cup \cured[t]|-b$ healthy nodes send a legitimate tuple to all other nodes, and  $E_k \neq \emptyset$ and $E_k[i] = v$ for node $k \in \healthy[t]\cup\cured[t] - \faulty[t+1]$. Denote this set of healthy nodes by $A$.
		
		\item Since $b$ mobile faults move in round $t+1$, exactly $b$ nodes are cured and send the confession ($\emptyset$) in round $t+1$. Denote this set of cured nodes by $B$.
	\end{itemize}
	Note that nodes in $A \cup B$ are either cured or healthy; hence, all fault-free nodes will observe their behavior identically.
	
	Now, consider a node $j \in \healthy[t+1] \cup \cured[t+1]$. From its perspective, {\em Condition 1} in the compute step in round $t+1$ is met due to the observations above and the fact that $|A|+|B| \geq (|\healthy[t] \cup \cured[t]|-b) + b = |\healthy[t] \cup \cured[t]| \geq n-f$. Moreover, by definition, $i$ is healthy in round $t+1$; hence, {\em Condition 2} is also met. Therefore, node $j$ will have $V_j[i] = v$. \fillbox
\end{proof}

\begin{lemma} {\bf (Integrity II)}
	\label{lemma:integrityII}
	If node $i$ is healthy in round $t$ and becomes faulty in round $t+1$, then for all fault-free $j \in \healthy[t+1] \cup \cured[t+1]$, either $V_j[i] = \perp$ or $V_j[i] = v_i[t-1]$, the value sent by node $i$ in round $t$.
\end{lemma}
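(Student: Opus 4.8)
The plan is to argue on the value rather than on membership: it suffices to show that whenever $V_j[i] \neq \perp$, the trustworthy value equals $v := v_i[t-1]$, since the case $V_j[i] = \perp$ is already permitted by the statement. By the compute step of round $t+1$, node $j$ sets $V_j[i] = u$ for some $u \neq \perp$ only if \emph{Condition 1} holds for $u$: there are at least $n-f$ distinct nodes $k$ with either (i) $R_j[k] = E_k \neq \emptyset$ and $E_k[i] = u$, or (ii) $R_j[k] = \emptyset$. So I only need to rule out $u \neq v$; \emph{Condition 2} plays no role in this direction.

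First I would pin down the honest endorsers. Every node $k$ that is healthy in round $t+1$ has uncorrupted storage in all previous rounds, so it was fault-free in round $t$, i.e. $k \in \healthy[t] \cup \cured[t]$. Since $i$ is healthy in round $t$ it sent the genuine value $v$ to everyone in the Collection Phase, and $k$ — being fault-free in round $t$ — received it and set $E_k[i] = v$ (the ``if $i$ is healthy'' guard affects only the state update, not the construction of $E_k$). In the Confession Phase $k$ is healthy, hence it sends its true, non-empty vector $E_k$; because links are reliable and messages unforgeable, $j$ obtains $R_j[k] = E_k \neq \emptyset$ with $E_k[i] = v$.

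Next I would count. Fix a candidate $u \neq v$ and examine which nodes $k$ could contribute to \emph{Condition 1} for this $u$. A round-$(t+1)$ healthy node $k$ fails clause (i) for $u$ (because $E_k[i] = v \neq u$) and fails clause (ii) (because $R_j[k] = E_k \neq \emptyset$); hence no healthy-in-$(t+1)$ node is counted. Every counted node therefore lies in $\faulty[t+1] \cup \cured[t+1]$. Writing $\phi = |\faulty[t+1]| \leq f$ and $c = |\cured[t+1]|$, and noting that $c \leq |\faulty[t]| \leq f$ since each cured node was faulty in the previous round, the pool of possible endorsers has size at most $\phi + c \leq 2f$. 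But \emph{Condition 1} demands at least $n-f$ endorsers, and $n - f \geq \lceil 7f/2\rceil + 1 - f > 2f$. This contradiction forces $u = v$, which establishes the lemma.

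The main obstacle is the ``wildcard'' behaviour in clause (ii): a node that confesses ($\emptyset$) is credited toward \emph{any} candidate value, and a node that has just turned faulty in round $t+1$ may simultaneously forge an $E_k$ endorsing a bogus $u$ and lie about its confession. The crux of the argument is precisely that these wildcard and forgery slots are confined to $\faulty[t+1] \cup \cured[t+1]$: the genuinely healthy round-$(t+1)$ nodes are irrevocably locked to $E_k[i] = v$ and are delivered faithfully to $j$, so they can never be co-opted to endorse a wrong value. I would take care to justify $\healthy[t+1] \subseteq \healthy[t] \cup \cured[t]$ from the ``no corruption in past rounds'' clause of the fault model, since the entire counting bound $2f < n-f$ hinges on it.
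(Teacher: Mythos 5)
Your proof is correct and follows essentially the same route as the paper's: rule out any candidate $u \neq v_i[t-1]$ by showing that every node counted toward \emph{Condition 1} for $u$ must lie outside $\healthy[t+1]$, bound that pool by $2f$, and contradict the $n-f$ threshold. The paper phrases the count as $n - |\healthy[t+1]| \leq 2f$ where you write $|\faulty[t+1]| + |\cured[t+1]| \leq 2f$, but this is the same bound and the same contradiction.
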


\begin{proof}
	The proof is by contradiction. Suppose that at some node $j \in \healthy[t+1] \cup \cured[t+1]$, $V_j[i] = u$ such that $u \neq \perp$. 
	Now, observe that:
	\begin{itemize}
		\item {\em Obs 1}: $V_j[i] = u$ only if there are enough node $k$ that endorses or confesses (Condition 1 in Algorithm CC). Denote this set of nodes by $W_u$. And we have $|W_u| \geq n-f$.
		\item {\em Obs 2}: Since node $i$ is healthy in round $t$, every node $k \in \healthy[t+1]$ did \textit{not} endorse value $u$ (they heard value $v$ and endorses $v$ in round $t$). 
		\item {\em Obs 3}: Obs 2 together with the fact that $|\healthy[t+1]| \geq n-2f$ imply that there are $\leq n-(n-2f) = 2f$ nodes in the set $W_u$.
	\end{itemize}
	We have $|W_u| \leq 2f < \lceil 7f/2 \rceil +1-f = \lceil 5f/2 \rceil +1$, contradicting Obs 1. \fillbox
	
\end{proof}	

\begin{lemma} {\bf (Integrity III)}
	\label{lemma:integrityIII}
	If node $i$ is cured in round $t$, then for all fault-free $j \in \healthy[t+1] \cup \cured[t+1]$, $V_j[i] = \perp$.
\end{lemma}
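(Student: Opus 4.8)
The plan is to argue by contradiction, mirroring the structure of the proof of Integrity II (Lemma~\ref{lemma:integrityII}). Suppose some fault-free $j \in \healthy[t+1] \cup \cured[t+1]$ ends up with $V_j[i] = u$ for some $u \neq \perp$. By the compute step of round $t+1$, this forces \emph{Condition 1} to hold for the value $u$: there is a set $W_u$ of at least $n-f$ distinct nodes $k$ that either endorse $u$ (i.e.\ $R_j[k] = E_k \neq \emptyset$ and $E_k[i] = u$) or confess (i.e.\ $R_j[k] = \emptyset$). The whole argument then reduces to showing that $W_u$ is too small to reach $n-f$.

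The single fact that drives everything is that $i$ is \emph{cured} in round $t$, so in the send step of the Collection Phase it transmits $(\perp,i)$; consequently every node that is fault-free in round $t$ records $E_k[i] = \perp$ (the footnote convention makes this hold even if nothing were received). First I would pin down the bookkeeping: any $k \in \healthy[t+1]$ is, by definition of healthy in round $t+1$, fault-free in round $t+1$ and \emph{not} cured in round $t+1$, hence also fault-free in round $t$. Such a $k$ therefore (a) has $E_k[i] = \perp$, and (b) in round $t+1$ sends its genuine non-empty vector $(E_k,k)$, so $j$ receives $R_j[k] = E_k \neq \emptyset$ with $E_k[i] = \perp \neq u$.

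Combining (a) and (b), no $k \in \healthy[t+1]$ satisfies either clause of Condition 1 for $u \neq \perp$: it fails the endorsement clause since $E_k[i] = \perp \neq u$, and it fails the confession clause since $R_j[k] \neq \emptyset$. Hence $W_u \cap \healthy[t+1] = \emptyset$, so $|W_u| \le n - |\healthy[t+1]|$. Using the same counting as in Integrity II, the healthy set loses at most $f$ nodes to faults in round $t$ and at most another $f$ to faults in round $t+1$, giving $|\healthy[t+1]| \ge n - 2f$ and therefore $|W_u| \le 2f$. Since $2f < \lceil 7f/2\rceil + 1 - f = \lceil 5f/2\rceil + 1 \le n-f$, this contradicts $|W_u| \ge n-f$, and so $V_j[i] = \perp$.

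The arithmetic is routine once the setup is right; the care is entirely in the bookkeeping. The point I would be most careful about is steps (a)/(b) for a node that happens to have been cured in round $t$ yet is healthy in round $t+1$: I must confirm it still executes the \emph{unconditional} assignment $E_k[i] \gets u$ in round $t$ (it does, because only the state update $v_k[t] \gets v_k[t-1]$ is gated on being healthy), so it too records $E_k[i] = \perp$ and relays it faithfully. It is also worth noting that \emph{Condition 2} plays no role here: because Condition 1 already fails for every $u \neq \perp$, the conclusion $V_j[i] = \perp$ follows regardless of whether $i$ is healthy or faulty in round $t+1$.
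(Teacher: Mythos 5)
Your proof is correct, but it takes a different route from the one the paper points to. The paper omits the proof of Integrity III and says it is ``similar to the proof of Lemma~\ref{lemma:integrityI}'', i.e., a \emph{direct} argument: since $i$ is cured in round $t$ it sends $(\perp,i)$, so every node fault-free in round $t$ records $E_k[i]=\perp$; the sets $A$ (nodes fault-free in both rounds, which relay $E_k[i]=\perp$) and $B$ (nodes cured in round $t+1$, which confess) together have size at least $n-f$, so Condition~1 is satisfied with $u=\perp$ and every fault-free $j$ sets $V_j[i]=\perp$. You instead argue by \emph{exclusion}, mirroring Integrity II: no value $u\neq\perp$ can accumulate $n-f$ witnesses, because every $k\in\healthy[t+1]$ is fault-free in round $t$, holds $E_k[i]=\perp$, and sends a non-empty vector, so $W_u\subseteq S\setminus\healthy[t+1]$ and $|W_u|\leq 2f < n-f$. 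Both arguments are valid and of comparable length; yours has the modest advantage that it does not depend on how one resolves the (implicit) quantification of $u$ in Condition~1 --- it rules out every non-$\perp$ candidate rather than exhibiting a winning one --- and your closing remarks correctly identify the two bookkeeping points that matter (a node cured in round $t$ still executes $E_k[i]\gets u$ unconditionally, and Condition~2 is not needed). The counting $|\healthy[t+1]|\geq n-2f$ and the final inequality $2f<\lceil 5f/2\rceil+1\leq n-f$ are exactly the ones used in Integrity II and are applied correctly here.
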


The proof is similar to the proof of Lemma \ref{lemma:integrityI} and omitted here for brevity.

\begin{lemma} {\bf (Integrity IV)}
	\label{lemma:integrityIV}
	If node $i$ is cured in round $t+1$, then for all fault-free $j \in \healthy[t+1] \cup \cured[t+1]$, $V_j[i] = \perp$.
\end{lemma}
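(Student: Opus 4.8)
The plan is to argue directly from the algorithm specification, exploiting the confession that node $i$ is forced to send. Since node $i$ is cured in round $t+1$, by definition it was faulty in round $t$ and becomes fault-free at the start of round $t+1$; being fault-free, it obeys the algorithm. First I would invoke the Send step of the Confession Phase: a cured node transmits the confession $(\emptyset, i)$, and because the Send step broadcasts one message to all nodes, \emph{every} fault-free $j \in \healthy[t+1] \cup \cured[t+1]$ receives this legitimate tuple. In its Receive step, node $j$ therefore sets $R_j[i] \gets \emptyset$.

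Next I would push this fact into the Compute step. When $j$ evaluates the column for node $i$, \emph{Condition 2} demands $R_j[i] \neq \emptyset$, which fails by the previous step. Since the algorithm assigns $V_j[i] \gets u$ only when \emph{both} conditions hold, the failure of \emph{Condition 2} alone forces $V_j[i] \gets \perp$, establishing the lemma uniformly over all fault-free $j$.

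The subtlety worth flagging, and the reason \emph{Condition 2} is indispensable here, is that I cannot lean on \emph{Condition 1}: node $i$ was faulty in round $t$, so it may have sent an arbitrary value in the Collection Phase, and the adversary may have arranged enough colluding endorsers to satisfy \emph{Condition 1} for some non-$\perp$ value $u = E_j[i]$. The confession route via $R_j[i] = \emptyset$ is what guarantees $V_j[i] = \perp$ regardless. I do not anticipate a genuine obstacle; the only step requiring a word of justification is that the confession reaches \emph{all} fault-free nodes, which holds because a cured node is fault-free and hence broadcasts honestly, unlike a faulty node that might confess selectively. This mirrors Lemma \ref{lemma:integrityIII}, except that there a node cured during round $t$ broadcasts $\perp$ in the Collection Phase and the $\perp$ propagates through the $E$ vectors, whereas here the $\emptyset$ confession operates through $R$ and \emph{Condition 2}.
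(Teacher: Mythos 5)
Your proposal is correct and follows essentially the same argument as the paper: the cured node $i$ broadcasts the confession $(\emptyset, i)$, so every fault-free $j$ sets $R_j[i] = \emptyset$, \emph{Condition 2} fails, and hence $V_j[i] = \perp$. The extra remarks on why \emph{Condition 1} cannot be relied upon are a sensible clarification but do not change the substance.
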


\begin{proof}
	Since node $i$ is cured in round $t+1$, it will send the confession $(\emptyset)$ to all fault-free nodes in round $t+1$. Thus, for all $j \in \healthy[t+1] \cup \cured[t+1]$, $R_j[i] = \emptyset$, violating Condition 2. Therefore, $V_j[i] = \perp$. \fillbox
\end{proof}

~

The only case left is analyzing the behavior of nodes which remain faulty in both rounds $t$ and $t+1$. These nodes are indeed able to produce different values in the $V$ vectors at fault-free nodes; however, by construction, these nodes are limited in number. To see this, consider the following two scenarios:
\begin{itemize}
	\item When no faulty node moves, i.e., $\faulty[t] = \faulty[t+1]$. Then, all fault-free nodes receive identical $V$ vectors, since Condition 1 cannot be satisfied if faulty nodes send different values to different nodes in round $t$.
	\item When all faulty nodes move, i.e., $\faulty[t] \cap \faulty[t+1] = \emptyset$. Then, by Lemmas \ref{lemma:integrityII}, \ref{lemma:integrityIII}, and \ref{lemma:integrityIV}, no fault-free nodes will see different values in round $t+1$.
\end{itemize}
The lemma below characterizes the bound on the number of different values.

\begin{lemma}
	\label{lemma:limit}
	Suppose $n \geq \lceil 7f/2 \rceil+1$. For a pair of fault-free nodes $i, j \in \healthy[t+1] \cup \cured[t+1]$, at most $\lfloor f/2 \rfloor-1$ non-$\perp$ values differs in $V_i$ and $V_j$. In other words, there are $\geq n - \lceil f/2 \rceil +1$ identical values in $V_i$ and $V_j$.
\end{lemma}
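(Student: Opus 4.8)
The plan is to reduce the statement to a counting bound on the set $F = \faulty[t] \cap \faulty[t+1]$ of nodes that are faulty in \emph{both} rounds, and then to bound $|F|$. First I would show that any source position $s$ for which $V_i[s]$ and $V_j[s]$ are two \emph{distinct} non-$\perp$ values must satisfy $s \in F$. Lemmas~\ref{lemma:integrityI}--\ref{lemma:integrityIV} cover every combination of states of $s$ in rounds $t$ and $t+1$ except the case where $s$ is faulty in both: if $s$ is healthy in round $t$ then Integrity I/II force $V_i[s], V_j[s] \in \{\perp, v_s[t-1]\}$, so they cannot be two different non-$\perp$ values; if $s$ is cured in round $t$ (Integrity III) or cured in round $t+1$ (Integrity IV) then both equal $\perp$; and if $s$ is faulty in round $t$ but fault-free in round $t+1$ it is by definition cured in round $t+1$, again Integrity IV. Hence the number of differing non-$\perp$ positions is at most $|F|$.

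Next I would run the endorsement count for a single ``bad'' source $s \in F$ with $V_i[s] = u_i \neq u_j = V_j[s]$, both non-$\perp$. Writing $c = |\cured[t+1]|$ and $f' = |\faulty[t+1]|$, the key observation is that a node healthy in round $t+1$ sends a \emph{single, consistent} vector $E_k$ to both $i$ and $j$, so each such node endorses at most one value $E_k[s]$; in particular the witnesses of $u_i$ and of $u_j$ inside $\healthy[t+1]$ are disjoint. By contrast, the cured-in-$(t{+}1)$ nodes confess to everyone and a faulty-in-$(t{+}1)$ node may send different vectors to $i$ and to $j$, so the $c$ confessions and all $f'$ faulty nodes can be counted toward Condition~1 at \emph{both} $i$ and $j$. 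Applying Condition~1 at $i$ and at $j$ then forces at least $n - f - c - f'$ healthy-in-$(t{+}1)$ witnesses for each of $u_i, u_j$; disjointness together with $|\healthy[t+1]| = n - c - f'$ gives $2(n - f - c - f') \le n - c - f'$, i.e. $n \le 2f + c + f'$.

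Finally I would convert this into a bound on $|F|$. Using $|\faulty[t]| = |F| + c \le f$ and $|\faulty[t+1]| = |F| + b \le f$, where $b$ is the number of newly faulty nodes, I get $c + b \le 2f - 2|F|$, whence $n \le 2f + c + f' = 2f + |F| + (c + b) \le 4f - |F|$. With $n \ge \lceil 7f/2 \rceil + 1$ this yields $|F| \le 4f - n \le \lfloor f/2 \rfloor - 1$, so at most $\lfloor f/2 \rfloor - 1$ positions carry differing non-$\perp$ values and the remaining entries agree, giving the claimed number of identical entries; when no bad source exists the count is $0$ and the bound is trivial.

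The step I expect to be the main obstacle is the endorsement count, where one must cleanly separate the ``honest witness'' power of the $\healthy[t+1]$ nodes---each forced to report one and the same value to both $i$ and $j$---from the adversarial power of nodes faulty in round $t+1$, which may equivocate between $i$ and $j$ and hence contribute to two different values simultaneously. Pinning down exactly which nodes may be double-counted (all of $\cured[t+1]$ and $\faulty[t+1]$, but none of $\healthy[t+1]$) is what makes the disjointness bound tight, and the subsequent bookkeeping among persistent-faulty, cured, and newly-faulty nodes is what extracts the precise $\lfloor f/2 \rfloor$ saving underlying the $\lceil 7f/2\rceil + 1$ resilience.
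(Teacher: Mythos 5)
Your proof is correct and rests on the same two pillars as the paper's: Integrity I--IV to confine disagreeing positions to the set of nodes faulty in both rounds, and a witness count exploiting that $\healthy[t+1]$ nodes send a single consistent $E$-vector to $i$ and $j$ while only cured and faulty nodes can be double-counted toward Condition~1. The only difference is organizational -- you derive the direct inequality $n \le 4f - |F|$ and solve for $|F|$, whereas the paper argues by contradiction after first bounding the number of moving faults by $\lceil f/2\rceil$; your bookkeeping with $c$, $b$, and $|F|$ is arguably the cleaner of the two.
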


\begin{proof}
	The proof is by contradiction. Suppose that there exists a pair of fault-free nodes $i, j$ such that $\lfloor f/2 \rfloor$ different values appear in $V_i$ and $V_j$. Consider the value belonging to some node $k$, i.e., $V_i[k] \neq V_j[k]$ and $V_i[k], V_j[k] \neq \perp$. Then, we can make the following observations:
	
	\begin{itemize}
		\item {\em Obs 1}: By Lemmas \ref{lemma:integrityI}, \ref{lemma:integrityII}, \ref{lemma:integrityIII}, and \ref{lemma:integrityIV}, node $k$ must remain faulty in both rounds $t$ and $t+1$.
		
		\item {\em Obs 2}: By Condition 1, there are $\geq n-f$ nodes that send the value $V_i[k]$ or send the confession ($\emptyset$) to node $i$ in round $t+1$. Denote this set of nodes by $W_i$. For easiness of discussion, let us call these nodes the ``witnesses'' of the value $V_i[k]$.
		
		\item {\em Obs 3}: By assumption and Obs 1, at most $\lceil f/2 \rceil$ faults move from round $t$ to round $t+1$.
		
		\item {\em Obs 4}: Among the nodes in $W_i$, at least $|W_i| - (f + \lceil f/2 \rceil)$ are nodes that are healthy in both rounds $t$ and $t+1$. This is because (i) by Obs 3, at most $\lceil f/2 \rceil$ faults move, and (ii) cured node $l$ in round $t+1$ (i.e., $l \in \cured[t+1]$) send the confession ($\emptyset$) in round $t+1$, which result into $R_i[l] = R_j[l] = \emptyset$ in the receive step of round $t+1$.
	\end{itemize}
	
	Now, consider node $j$. By Obs 4, it has $\leq n-(|W_i| - (f + \lceil f/2 \rceil))$ witnesses of the value $V_j[k]$, since this is the number of nodes that are healthy in both rounds $t$ and $t+1$ and endorses the value $V_i[k]$. Denote this set of witness of the value $V_j[k]$ by $W_j$. Then, we have
	
	\begin{align*}
	|W_j| &\leq n-(|W_i| - (f +  \lceil f/2 \rceil)) \\
	&\leq n-((n-f)-(f +  \lceil f/2 \rceil)) = \lceil 5f/2 \rceil ~~~~~~\text{by Obs 2} \\
	&< \lceil 5f/2 \rceil+1 = n-f
	\end{align*}
	
	Therefore, Condition 1 is not satisfied at node $j$; hence, $V_j[k]$ can only be either the value $V_i[k]$ or $\perp$, a contradiction.	\fillbox
\end{proof}	

\commentOut{++++++
	The next lemma presents a useful observation on faulty values -- the non-$\perp$ value sent by faulty nodes.
	
	\begin{lemma}
		\label{lemma:f-b}
		If $b \leq f$ faults move in round $t+1$, then there are $\leq f-b$ faulty values in $V_i$ in round $t+1$ for each fault-free node $i \in \healthy[t+1] \cup \cured[t+1]$.
	\end{lemma}
	
	\begin{proof}
		This lemma follows from Lemma \ref{lemma:integrityII} and the fact that there are at most $f$ faulty nodes.
	\end{proof}
	+++++}

\subsection{Correctness}

For brevity, we only prove the correctness properties for healthy nodes, since cured nodes will have valid state variables if they remain fault-free in the next round. We begin with a useful lemma on $\num$ (as per Definition \ref{def:num}). Here, a faulty value is the non-$\perp$ value sent by faulty nodes.

\begin{lemma}
	\label{lemma:num}
	For a given odd round $t \geq 1$ and $i \in \healthy[i]$, there are at most $\num_i$ faulty values in $V_i[t]$.
\end{lemma}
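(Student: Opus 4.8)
The plan is to bound the number of faulty values in $V_i[t]$ by partitioning all nodes according to their fault status across the two rounds that form this phase--pair, namely the collection round $t-1$ and the confession round $t$ (since $t$ is odd, it is a confession round and $V_i[t]$ is built there), and then counting how each class can contribute non-$\perp$ faulty entries versus $\perp$ entries.

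First I would use the four Integrity lemmas to pin down which nodes can contribute a faulty value. By Lemmas \ref{lemma:integrityI}--\ref{lemma:integrityIV}, a node healthy in both rounds contributes its correct value, any node cured in round $t-1$ or in round $t$ contributes $\perp$, and a node healthy in $t-1$ but faulty in $t$ contributes either $\perp$ or its (legitimate) previous value. Hence a non-$\perp$ faulty entry $V_i[k]$ can only come from a node $k$ that is faulty in \emph{both} rounds. Writing $FF=|\faulty[t-1]\cap\faulty[t]|$, $b=|\cured[t]|$ for the faults that move at the start of round $t$, and letting $g$ be the number of faulty values and $p$ the number of faulty--in--both nodes whose entry is $\perp$, I get $g=FF-p$; moreover, since $\cured[t]=\faulty[t-1]\setminus\faulty[t]$, we have $FF+b=|\faulty[t-1]|\le f$, so $FF\le f-b$.

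Next I would set up the $\perp$--count $x$. The $b$ confessors (cured in $t$), the $c=|\cured[t-1]|\le f$ nodes cured in $t-1$ (which sent $\perp$ in the collection round), the $p$ faulty--in--both nodes that fail \emph{Condition 1}, and possibly some of the $HF=|\healthy[t-1]\cap\faulty[t]|$ nodes all force $\perp$ entries, giving $x\le HF+c+b+p$. Definition \ref{def:num} then splits into two cases. If $x\le f$ then $\num_i=f\ge FF\ge g$ immediately. If $x>f$, I combine $g=FF-p$ with the bound on $x$ to obtain $2g+x\le 2FF+HF+b+c-p\le 2FF+HF+b+c$; using $HF+FF\le|\faulty[t]|\le f$ and $FF+b=|\faulty[t-1]|\le f$ yields $2FF+HF+b=(FF+HF)+(FF+b)\le 2f$, and $c\le f$ then gives $2g+x\le 3f$. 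Rearranging gives $g\le f-\tfrac{x-f}{2}$, and since $g$ is an integer, $g\le\lceil f-\tfrac{x-f}{2}\rceil=\num_i$.

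The main obstacle I expect is the careful bookkeeping across the two rounds under mobile faults: because a node's status can change between the collection and confession rounds, I must cleanly separate the nodes cured in $t-1$ (inherited from the previous phase--pair, which send the extra $\perp$'s that make Case~2 nontrivial) from the confessors cured in $t$, and verify that every node faulty in $t$ and every node faulty in $t-1$ is counted exactly once, so that the two inequalities $HF+FF\le f$ and $FF+b\le f$ are simultaneously available to close the $2g+x\le 3f$ bound. Getting the floor/ceiling right in the final integer rounding is a minor but necessary check.
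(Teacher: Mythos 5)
Your proof is correct and follows essentially the same counting argument as the paper: classify nodes by their fault status across the collection round $t-1$ and confession round $t$ via Integrity I--IV, upper-bound the number of $\perp$ entries, and compare against Definition \ref{def:num}. In fact your bookkeeping is slightly more careful than the paper's (you explicitly account for nodes in $\healthy[t-1]\cap\faulty[t]$ that contribute $\perp$ and do not assume the adversary uses exactly $f$ faults in each round), but the underlying idea and the final inequality $2g+x\le 3f$ are the same.
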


\begin{proof}
	If $V_i[t]$ contains $\leq f \perp$ values, then the lemma holds by assumption. Now, consider the case when there are $x \perp$ values in $V_i[t]$, where $x > f$. There are only three ways to produce $\perp$ values: (i) by cured nodes in round $t-1$ (due to Lemma \ref{lemma:integrityIII}), (ii) by cured nodes in round $t$ (due to due to Lemma \ref{lemma:integrityIV}), and (iii) by faulty nodes in round $t$.
	Assume that $b$ faults move in round $t$ and $b'$ faulty nodes produce $\perp$ values. Observe that (i) at most $f$ cured nodes in round $t-1$, (ii) exactly $b$ cured nodes in round $t$, and (iii) exactly $f-(b+b')$ faulty values in $V_i[t]$. Then, we have
	\begin{align*}
	\num_i 	&= \lceil f-\frac{x-f}{2} \rceil~~~~~~~~~~~~~~~~~~~~~~~~~~~~\text{by Definition \ref{def:num}}\\
	&\geq \lceil f-\frac{(b+f+b')-f}{2} \rceil = \lceil f- \frac{b+b'}{2}\rceil~~~~~~\text{by observations above}\\
	&\geq f-(b+b') = \text{number of faulty values in}\ V_i~~~~~~~~~~~~~~~~~~~~~~~~\text{\fillbox}
	\end{align*}	
	
\end{proof}

\begin{lemma}
	{\bf (Validity)}
	\label{lemma:validity}
	For a given round $t \geq 0$, if $i \in \healthy[t]$, then
	
	\[
	\M[0] \geq v_i[t] \geq \m[0]
	\]
\end{lemma}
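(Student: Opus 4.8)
The plan is to prove the (slightly stronger) statement by induction on the round index $t$, handling the two phase-types separately. For a confession round (an odd round, where $v$ is updated through $\reduce$) I would show that \emph{every} fault-free node, i.e. every $i \in \healthy[t] \cup \cured[t]$, satisfies $\m[0] \le v_i[t] \le \M[0]$; for a collection round (an even round) I would show it for every $i \in \healthy[t]$. The lemma is then the special case $i \in \healthy[t]$. The base case $t=0$ is immediate: $v_i[0]$ is the input of $i$, and $i \in \healthy[0] \subseteq \healthy[0] \cup \cured[0]$, so $\m[0] \le v_i[0] \le \M[0]$ by the definitions of $\M[0]$ and $\m[0]$.

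The collection-round step is the easy one. If $t$ is even and $i \in \healthy[t]$, the compute step sets $v_i[t] \gets v_i[t-1]$. Being healthy in round $t$ forces $i$ to have been fault-free in round $t-1$, so $i \in \healthy[t-1] \cup \cured[t-1]$; the confession-round induction hypothesis then gives $v_i[t-1] \in [\m[0], \M[0]]$, hence the same bound for $v_i[t]$.

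The confession-round step is the heart of the argument. Fix an odd $t$ and a fault-free $i \in \healthy[t] \cup \cured[t]$, so that $v_i[t] = \reduce(V_i)$. Since $V_i$ (and hence $\num_i$) is built solely from the messages received in round $t$, a just-cured node computes it correctly, so Lemma~\ref{lemma:num} applies to every such $i$. I would then classify the non-$\perp$ entries of $V_i$ via the Integrity lemmas: by Lemmas~\ref{lemma:integrityI}--\ref{lemma:integrityIV}, each non-$\perp$ entry is either (a) a value broadcast in the preceding collection round $t-1$ by a node $k$ healthy in that round (hence fault-free in round $t-2$, so that value lies in $[\m[0],\M[0]]$ by the induction hypothesis, or is an input in the first phase), or (b) a faulty value produced by a node faulty in both rounds $t-1$ and $t$. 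By Lemma~\ref{lemma:num} there are at most $\num_i$ faulty values. Every type-(a) entry is $\le \M[0]$, so any entry exceeding $\M[0]$ must be faulty; thus at most $\num_i$ entries of $V_i$ exceed $\M[0]$, they occupy the top $\num_i$ positions of the sorted vector $O_i$, and the trim removes them, leaving $\max(O_i^t) \le \M[0]$. The symmetric argument gives $\min(O_i^t) \ge \m[0]$, and since the returned value $\tfrac{\min(O_i^t)+\max(O_i^t)}{2}$ lies between these two, $v_i[t] \in [\m[0],\M[0]]$.

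The main obstacle is making this confession-round step airtight. Three points need care: (i) justifying that Lemma~\ref{lemma:num} and the $\reduce$ computation extend from healthy to cured fault-free nodes in the confession round, which is exactly what lets the induction hypothesis be invoked in case (a) when the relevant node was \emph{cured} two rounds earlier (this is why the statement must be strengthened to all fault-free nodes); (ii) checking that $\reduce$ is well-defined, i.e. $|O_i| > 2\,\num_i$ so $O_i^t$ is non-empty, which follows from $n \ge \lceil 7f/2 \rceil + 1$ together with Definition~\ref{def:num} by a short case split on whether $x \le f$ or $x > f$; and (iii) the bookkeeping that all out-of-range entries are faulty and therefore fall inside the trimmed top and bottom $\num_i$ blocks, which is precisely where Lemma~\ref{lemma:num} is used.
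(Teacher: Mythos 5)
Your proposal is correct and follows essentially the same route as the paper's proof: induction on rounds, with the even (collection) step trivial and the odd (update) step resting on Lemma~\ref{lemma:num} to guarantee that all out-of-range faulty values fall within the trimmed top and bottom $\num_i$ blocks of $O_i$. The extra care you take — strengthening the induction hypothesis to all fault-free nodes so that values originating from nodes cured two rounds earlier are covered, and checking $|O_i| > 2\,\num_i$ — fills in details the paper dispatches with its ``for brevity, we only prove the correctness properties for healthy nodes'' remark, but does not change the underlying argument.
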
	

\begin{proof}
	The proof is by induction on the number of rounds. 
	\begin{itemize}
		\item {\em Initial Step}: When $t = 0$, the statement holds, since by definition, \\$~~~\M[0] \geq v_i[0] \geq \m[0]$.
		
		\item {\em Induction Step}: suppose the statement holds for some $h > 0$, consider round $h+1$. If $h$ is a \textit{Collection Round} ($h$ is even), then the statement holds trivially, since $v_i[h] \gets v_i[h-1]$ in the compute step.
		Now, consider the case when $h$ is odd ($h$ is an \textit{Update Round}). 
		Lemma \ref{lemma:num} implies that in the trim step of the $\reduce$ function (the fourth step), all the faulty values will be trimmed if they are too large or too small. Therefore, the maximal and minimal values in $O_i^t$ will always be inside the range of the maximal and minimal values of state values of the fault-free nodes in round $h$. Hence the return value of the $\reduce$ function satisfies Validity by the induction hypothesis.	\fillbox
	\end{itemize}
	
\end{proof}	

Before proving convergence, we show a lemma that bounds the range of the updated state variables. Recall that $\M[t]$ and $\m[t]$ represent the maximal and minimal state variables, respectively, at healthy nodes in round $t$. We only care about the state variables in the even round, since in the odd round, the state variable remains the same at healthy nodes.

\begin{lemma}
	\label{lemma:shrink}
	For some even integer $t > 0$, we have
	\[
	\M[t+1] - \m[t+1] \leq \frac{\M[t-1] - \m[t-1]}{2}
	\]
\end{lemma}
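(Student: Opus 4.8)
The plan is to reduce the diameter bound to a pairwise estimate and then exploit the midpoint structure of $\reduce$. First I would record the bookkeeping: since round $t$ is a Collection round, every node $j$ that is healthy in round $t$ keeps $v_j[t]=v_j[t-1]$, so the multiset of values fed into the round-$(t+1)$ update is exactly the set of round-$(t-1)$ state variables of the fault-free nodes, all of which lie in $[\m[t-1],\M[t-1]]$. Writing $M=\M[t-1]$ and $m=\m[t-1]$, it then suffices to show that for every pair of nodes $i,j$ healthy in round $t+1$ we have $v_i[t+1]-v_j[t+1]\le (M-m)/2$, because (restricting to healthy nodes as the paper does) the maximum of this quantity over all such pairs is exactly the spread $\M[t+1]-\m[t+1]$.

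Next I would pin down the two endpoints produced by $\reduce$. By Lemma~\ref{lemma:num} the vector $V_i$ contains at most $\num_i$ faulty (non-$\perp$) values, while $\reduce$ discards the bottom $\num_i$ and the top $\num_i$ entries of the sorted, $\perp$-free vector $O_i$. A counting argument then shows $\min(O_i^t)=O_i[\num_i+1]\ge m$ (otherwise the $\num_i+1$ smallest entries would all be faulty, exceeding the bound of Lemma~\ref{lemma:num}) and symmetrically $\max(O_i^t)\le M$; hence $m\le\min(O_i^t)\le\max(O_i^t)\le M$ for every healthy $i$. This is the same mechanism driving the Validity lemma, so I would simply invoke Lemma~\ref{lemma:num} here.

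The heart of the proof is an \emph{overlap} claim: for any two healthy nodes $i,j$ the trimmed ranges $[\min(O_i^t),\max(O_i^t)]$ and $[\min(O_j^t),\max(O_j^t)]$ intersect. Granting this, assume without loss of generality $v_i[t+1]\ge v_j[t+1]$; overlap gives $\min(O_i^t)\le\max(O_j^t)$, so
\[
v_i[t+1]=\tfrac{\min(O_i^t)+\max(O_i^t)}{2}\le\tfrac{\max(O_j^t)+M}{2},\qquad v_j[t+1]=\tfrac{\min(O_j^t)+\max(O_j^t)}{2}\ge\tfrac{m+\max(O_j^t)}{2},
\]
and subtracting yields $v_i[t+1]-v_j[t+1]\le (M-m)/2$, the desired pairwise bound.

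To prove overlap I would argue by contradiction: if, say, $\min(O_i^t)>\max(O_j^t)$, then I count the positions $k$ with $V_i[k]\ge\min(O_i^t)$ (there are at least $|V_i'|-\num_i$ of them) against the positions where $V_i$ and $V_j$ can legitimately disagree, and conclude that $V_j$ must contain at least $\num_j+1$ entries that are $\ge\min(O_i^t)$, so $\max(O_j^t)\ge\min(O_i^t)$ after all. The main obstacle is precisely this count: because the trim amount $\num_i$ varies with how many $\perp$/confessions a node sees, $i$ and $j$ may trim different numbers of values, and a node faulty in both rounds may present a genuine value to $i$ while $j$ sees $\perp$, so the two $\perp$-counts $x_i,x_j$ need not agree. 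Controlling it requires combining Lemma~\ref{lemma:limit} (at most $\lfloor f/2\rfloor-1$ genuinely differing non-$\perp$ entries between $V_i$ and $V_j$) with the quantitative relation $x+2\,\num\le 3f+1$ between a node's $\perp$-count $x$ and its trim amount $\num$ that is implicit in Definition~\ref{def:num}, so that a larger $\perp$-count (hence a larger disagreement budget) is always compensated by a correspondingly smaller trim. Balancing these two effects is exactly where the hypothesis $n\ge\lceil 7f/2\rceil+1$ is consumed, and it is the step I expect to demand the most care.
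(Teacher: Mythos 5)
Your architecture matches the paper's: reduce to a pairwise bound, establish that the two trimmed vectors overlap, and finish with the midpoint algebra. The endpoint bounds $m\le\min(O_i^t)\le\max(O_i^t)\le M$ via Lemma~\ref{lemma:num} and the final subtraction are both correct, and your overlap claim is exactly what the paper proves in the guise of ``the median of the common entries of $V_i$ and $V_j$ survives in both trimmed vectors.'' The gap is in the one step you yourself flag as delicate: the counting you sketch for the overlap claim does not close. You count the $|O_i|-\num_i=n-x_i-\num_i$ entries of $V_i$ that are $\ge\min(O_i^t)$, subtract the at most $\lceil f/2\rceil-1$ positions where $V_i$ and $V_j$ may differ (Lemma~\ref{lemma:limit}), and need the remainder to exceed $\num_j$. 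That requires $n\ge x_i+\num_i+\num_j+\lceil f/2\rceil$, which is \emph{not} implied by the available bounds: the relation $x+2\,\num\le 3f+1$ controls $x_i+2\num_i$, not $x_i+\num_i+\num_j$. Concretely, nothing you have proved rules out $x_i$ near $3f$ (all of $\cured[t]$, $\cured[t+1]$ and $\faulty[t+1]$ producing $\perp$ at $i$), which forces $\num_i\approx 0$ and $x_i+\num_i\approx 3f$, while $x_j$ may be as small as $x_i-\lceil f/2\rceil+1$, allowing $\num_j\approx f/4$; then $x_i+\num_i+\num_j+\lceil f/2\rceil\approx 15f/4>\lceil 7f/2\rceil+1$ for $f>4$, and your lower bound on the qualifying entries of $V_j$ falls below $\num_j+1$.

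The repair is to count two-sidedly on the set $R'$ of common non-$\perp$ entries of $V_i$ and $V_j$ rather than one-sidedly on $V_i$. Under the hypothesis $\min(O_i^t)>\max(O_j^t)$, every element of $R'$ with value $\ge\min(O_i^t)$ lies strictly above $\max(O_j^t)$ and hence occupies one of the top $\num_j$ positions of $O_j$, while every element of $R'$ with value $<\min(O_i^t)$ occupies one of the bottom $\num_i$ positions of $O_i$; hence $|R'|\le\num_i+\num_j$. On the other hand, Lemma~\ref{lemma:limit} gives $|R'|\ge n-\lceil f/2\rceil+1-x$ with $x=\min(x_i,x_j)$, and since the formula in Definition~\ref{def:num} is non-increasing in the $\perp$-count, $x+\num_i+\num_j\le x+2\,\num(x)\le 3f+1$; therefore $|R'|\ge 3f+2-x>3f+1-x\ge\num_i+\num_j$ whenever $n\ge\lceil 7f/2\rceil+1$, a contradiction. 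This two-sided count is precisely what the paper's median-of-$R$ argument accomplishes; with it substituted for your one-sided count, the rest of your proof goes through.
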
	

\begin{proof}
	To prove the lemma, we need to show that for any pair of fault-free nodes $i, j$, we have
	
	\begin{equation}
	\label{eq:ij}
	| v_j[t+1] - v_i[t+1] | \leq  \frac{\M[t-1] - \m[t-1]}{2} 
	\end{equation}
	
	Let $V_i$ and $V_j$ denote the $V$ vectors at $i$ and $j$, respectively, at the compute step (the third step) of round $t+1$.
	Then, define $R = V_i \cap V_j$. Recall that $O^t_i$ and $O^t_j$ represent the trimmed vector in the $\reduce$ function at $i$ and $j$, respectively. Then, we have the following key claim:
	
	\begin{claim}
		\label{claim:media}
		Let $m$ be the median of the values in $R$. Then, $m \in O^t_i$ and $m \in O^t_j$.
	\end{claim}	
	
	\begin{proof}
		We make the following observations:
		\begin{itemize}
			\item {\em Obs 1}: By Lemma \ref{lemma:limit}, $|R| \geq n - \lceil f/2 \rceil + 1 \geq 3f+1$.
			\item {\em Obs 2}: Suppose there are $x \perp$ values in $R$. Consider two cases:
			\begin{itemize}
				\item Case I: if $x \leq f$, then $m \in O^t_i$, because after removing $f \perp$ values from $V_i$, we trim $f$ elements from each side. Similarly, we can show  $m \in O^t_j$.
				
				\item Case II: if $x \geq f$, then $m \in O^t_i$, because after removing $x \perp$ values from $V_i$, we trim $\num_i$ elements from each side. In other words, we trim at most
				
				\[
				x+2*\num_i = x+2\lceil (f-\frac{x-f}{2}) \rceil = x+2f-x+f=3f
				\]
				
				values from $R$. This together with Obs 1 implies that $m \in O^t_i$. Similarly, we can show  $m \in O^t_j$.
			\end{itemize}
			These two cases proves the claim. \fillbox
		\end{itemize}
	\end{proof}	
	
	The rest of the proof of Lemma \ref{lemma:shrink} follows from the claim using the standard tricks from prior work, e.g., \cite{AA_nancy,abraham_04_3t+1_async,vaidya_PODC12}. We include the proof in Appendix \ref{a:lemma:shrink}. \fillbox
\end{proof}	

Lemma \ref{lemma:shrink} and simple arithmetic operations imply the following:

\begin{lemma}
	{\bf (Convergence)}
	\label{lemma:converge}
	Given a $\epsilon > 0$, there exists a round $t$ such that $\M[t]-\m[t] < \epsilon$.
\end{lemma}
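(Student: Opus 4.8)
The plan is to convert the single-step contraction of Lemma~\ref{lemma:shrink} into a geometric decay by iterating it across the update (odd) rounds, and then to choose the round index large enough to fall below $\epsilon$. First I would fix the base case. By Validity (Lemma~\ref{lemma:validity}), together with the paper's convention that cured nodes inherit fault-free states, every fault-free node's state at the end of any round lies in $[\m[0], \M[0]]$; hence $\M[1] - \m[1] \leq \M[0] - \m[0]$. Writing $D_0 = \M[0] - \m[0]$ for the initial spread, this controls the first update round without itself invoking the contraction.

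Next I would apply Lemma~\ref{lemma:shrink} successively with $t = 2, 4, 6, \ldots, 2k$ (all even positive integers, as the lemma requires). Each application relates an odd round to the odd round two earlier and halves the spread, so the bounds telescope:
\[
\M[2k+1] - \m[2k+1] \leq \frac{\M[2k-1] - \m[2k-1]}{2} \leq \cdots \leq \frac{\M[1] - \m[1]}{2^{k}} \leq \frac{D_0}{2^{k}}.
\]
A short induction on $k$ formalizes this telescoping, the inductive step being exactly one invocation of Lemma~\ref{lemma:shrink}.

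Finally, given $\epsilon > 0$, I would pick any $k$ with $D_0/2^{k} < \epsilon$ — for instance any integer $k > \log_2(D_0/\epsilon)$ — and set $t = 2k+1$, which yields $\M[t] - \m[t] < \epsilon$, the Convergence condition. (If $D_0 = 0$ the spread is already zero in every round, so the claim is immediate and no choice of $k$ is needed.)

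I do not anticipate a genuine obstacle here: all the substantive work — showing that the trimmed vectors at any two fault-free nodes share a common median that forces the spread to halve — is already carried by Lemma~\ref{lemma:shrink}, which in turn rests on Lemma~\ref{lemma:limit} and the $\reduce$ design. The only points needing a little care are the round-parity bookkeeping (the contraction is stated only for even $t$, so the decay is witnessed on the odd update rounds while the even collection rounds merely copy the state) and confirming that the base case $\M[1]-\m[1] \le D_0$ comes from Validity rather than from the contraction itself.
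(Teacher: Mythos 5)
Your proposal is correct and matches the paper's intent exactly: the paper simply states that Lemma~\ref{lemma:shrink} ``and simple arithmetic operations'' imply Convergence, and your iteration of the halving across odd update rounds, with the Validity-based bound $\M[1]-\m[1] \le \M[0]-\m[0]$ as the base case and the choice $k > \log_2(D_0/\epsilon)$, is precisely that arithmetic spelled out. No gaps.
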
	

Lemmas \ref{lemma:validity} and \ref{lemma:converge} imply that Algorithm CC is correct:

\begin{theorem}
	{\bf (Correctness)}
	Algorithm CC solves approximate consensus in under Garay's model given that $n \geq \lceil 7f/2 \rceil +1$.
\end{theorem}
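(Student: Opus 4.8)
The plan is to read off approximate consensus directly from its definition: under Garay's model it is exactly the conjunction of \emph{Validity} and \emph{Convergence}, and each of these has already been isolated as a standalone lemma. The theorem is therefore a wrap-up that assembles Lemma~\ref{lemma:validity} and Lemma~\ref{lemma:converge} under the standing hypothesis $n \geq \lceil 7f/2 \rceil + 1$, with no new machinery required.

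For Validity I would invoke Lemma~\ref{lemma:validity}, which bounds every healthy node's state by $\m[0] \leq v_i[t] \leq \M[0]$ for all $t \geq 0$; taking extremes over the healthy nodes then yields $\m[t] \geq \m[0]$ and $\M[t] \leq \M[0]$. For Convergence I would invoke Lemma~\ref{lemma:converge}, which supplies, for any $\epsilon > 0$, a round $t$ with $\M[t] - \m[t] < \epsilon$. That lemma is itself the geometric consequence of the per-round halving in Lemma~\ref{lemma:shrink}, which in turn rests on the structural guarantees of the $V$ vectors (Integrity~I--IV together with Lemma~\ref{lemma:limit}) and the trimming count of Lemma~\ref{lemma:num}; so all the genuine work sits in those earlier results, and here I would simply cite them in sequence.

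The one point that needs care --- and the only place I expect friction in this wrap-up --- is the definitional mismatch between $\M[t], \m[t]$, which are extremes over \emph{all} fault-free nodes $\healthy[t] \cup \cured[t]$, and the two lemmas, which are phrased for healthy nodes only. A node cured in round $t$ may still carry a corrupted state, so its value need not respect the healthy-node bound at that instant. I would close this gap with the recovery observation already flagged in the text: a cured node either remains fault-free into the next update (odd) round, where its state is recomputed by $\reduce$ from the validated vector $V$ and thereby inherits exactly the healthy-node guarantees, or it relapses into $\faulty$ and leaves the fault-free set entirely. Hence tracking correctness through the healthy nodes loses nothing for the asymptotic $\epsilon$-agreement, and Lemmas~\ref{lemma:validity} and \ref{lemma:converge} together certify both required properties.
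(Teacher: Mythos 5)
Your proposal matches the paper's own treatment: the theorem is proved there in one line by combining Lemma~\ref{lemma:validity} (Validity) and Lemma~\ref{lemma:converge} (Convergence), exactly as you do. Your extra care about the healthy-versus-cured mismatch is the same informal recovery argument the paper itself flags just before Lemma~\ref{lemma:num}, so nothing is missing and nothing is genuinely different.
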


\section{Impossibility Result}
\label{s:impossible}

This section proves that for a certain family of round-based algorithms, $\lceil 7f/2 \rceil +1$ is the lower bound on the number of nodes (fault-tolerance level), proving that Algorithm CC is optimal within this family of algorithms.

\subsubsection{$2$-Memory Round-based Algorithms}
\label{s:2-alg}

As discussed before, the iterative algorithms considered in \cite{mobile_bonomi,mobile_bonomi2,AA_Dolev_1986,vaidya_PODC12} are \textit{memory-less}, i.e., it can only send its own state, and it updates state in every round. As proved in \cite{mobile_bonomi,mobile_bonomi2}, such type of memory-less algorithms requires $4f+1$ nodes. For the lower bound proof, we consider a slightly more general type of algorithms -- \textit{$2$-memory round-based algorithms} -- in which nodes can send arbitrary messages, carry information from the previous round, but nodes have to update their state variables every two rounds (hence, the name $2$-memory). While the definition seems constrained, many Byzantine consensus algorithms belong to this family of algorithms, e.g., \cite{mobile_garay,mobile_Banu2012,abraham_04_3t+1_async}. Note that the original algorithm proposed by Lamport, Shostak, and Pease \cite{lamport_agreement2} \textit{does not} belong to $2$-memory round-based algorithms, as nodes collect many more rounds of information before updating their state variables.

\subsubsection{Lower Bound Proof}

The lower bound proof is similar to the lower bound proof for iterative algorithms, e.g., \cite{AA_Dolev_1986,vaidya_PODC12}; however, we also need to consider how faulty nodes move, which makes the proof slightly more complicated. Note that using Integrity I-IV (Lemmas \ref{lemma:integrityI}, \ref{lemma:integrityII}, \ref{lemma:integrityIII}, and \ref{lemma:integrityIV}), it is fairly easy to show that for $f=1$, Algorithm CC solves the problem for $n=3f+1=4$.

\begin{theorem}
	\label{thm:lower_bound}
	It is impossible for any $2$-memory round-based algorithm to solve approximate consensus under Garay's model if $n \leq \lceil 7f/2 \rceil$ and $f > 1$.
\end{theorem}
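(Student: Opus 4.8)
The plan is to argue by contradiction using the standard indistinguishability (scenario) technique for approximate-agreement lower bounds, as in \cite{AA_Dolev_1986,vaidya_PODC12}, adapted to account for mobile faults and the confession constraint. Assume some $2$-memory algorithm $\mathcal{A}$ solves the problem with $n \le \lceil 7f/2\rceil$ and $f>1$. Since a $2$-memory algorithm updates its state once per pair of rounds, I treat each \emph{super-round} (a Collection-type round $t$ followed by a Confession-type round $t+1$) as the atomic unit of an execution. I fix a convergence target $\epsilon < 1/(m+2)$, where $m$ is the constant length of the chain of executions defined below, and aim to exhibit executions for which $\mathcal{A}$ cannot drive $\M-\m$ below $\epsilon$, contradicting Convergence.

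I would partition the $n$ nodes into: two honest groups $A,B$ of size $f$ each, with inputs $0$ and $1$; a \emph{persistent} faulty set $S_p$ of size $\lfloor f/2\rfloor$; and two \emph{rotating} sets $R_{\mathrm{in}},R_{\mathrm{out}}$ of size $\lceil f/2\rceil$. A direct count gives $|A|+|B|+|S_p|+|R_{\mathrm{in}}|+|R_{\mathrm{out}}| = 2f+\lceil 3f/2\rceil = \lceil 7f/2\rceil = n$ (for $n<\lceil 7f/2\rceil$ one simply shrinks $A,B$, which only makes the attack easier). In each super-round the round-$t$ faulty set is $S_p\cup R_{\mathrm{out}}$ and the round-$(t+1)$ faulty set is $S_p\cup R_{\mathrm{in}}$, each of size $\lfloor f/2\rfloor+\lceil f/2\rceil=f$; the nodes of $R_{\mathrm{out}}$ are cured in round $t+1$ and confess, while those of $R_{\mathrm{in}}$ turn faulty. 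Because the construction needs $|S_p|=\lfloor f/2\rfloor\ge 1$ persistent liars, it requires $f>1$, matching the hypothesis and consistent with the remark preceding the theorem that $n=4=3f+1$ already suffices when $f=1$.

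For the per-super-round masking, the persistent nodes in $S_p$ send the low value to $A$ and the high value to $B$ in round $t$, and in round $t+1$ the faulty set $S_p\cup R_{\mathrm{in}}$ sends inconsistent endorsements that corroborate the low reading to a target node $p\in A$ and the high reading to a target node $q\in B$. Crucially, the cured nodes of $R_{\mathrm{out}}$ are fault-free and must follow the protocol (confessing), so the adversary \emph{cannot} use them to rewrite the round-$t$ history; this is exactly the restriction confession imposes, and the reason the conflict must be carried by $S_p$. The feasibility count is the same as in Lemma \ref{lemma:limit}: with $|S_p|=c$ persistent faults and $\lceil f/2\rceil$ rotating faults in each direction, a target node can be furnished $n-f$ corroborating witnesses for a disputed value exactly when $c\le 4f-n$, i.e.\ $c\le\lfloor f/2\rfloor$ at $n=\lceil 7f/2\rceil$. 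Thus the adversary can mask one unit of disagreement for one designated honest node per super-round while keeping every honest node's view consistent with some legal run of $\mathcal{A}$. I then interpolate between the all-low and all-high input configurations by a constant-length chain $e_0,\dots,e_m$ in which consecutive executions differ only in data hidden from one designated honest node $p_k$ for the \emph{entire} run; indistinguishability forces $p_k$'s output to coincide across $e_k$ and $e_{k+1}$. Chaining these equalities through Convergence (all honest outputs lie within $\epsilon$ in each $e_k$) gives $|\mathrm{out}(e_0)-\mathrm{out}(e_m)|\le m\epsilon$, whereas Validity forces $\mathrm{out}(e_0)$ near $0$ and $\mathrm{out}(e_m)$ near $1$, so $|\mathrm{out}(e_0)-\mathrm{out}(e_m)|\ge 1-2\epsilon$; for $\epsilon<1/(m+2)$ this is a contradiction.

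The main obstacle is sustaining the masking over arbitrarily many super-rounds for an \emph{arbitrary} $2$-memory $\mathcal{A}$, not just Algorithm CC. Since cured nodes are fault-free and obey whatever confession $\mathcal{A}$ prescribes, the adversary must realize all of its extra power through the $\lfloor f/2\rfloor$ persistent faults together with the round-$(t+1)$ inconsistency of $R_{\mathrm{in}}$, and must do so without ever letting the two target views diverge round-by-round. The delicate points are showing that the members of $R_{\mathrm{in}}$, which are \emph{honest} in round $t$ and broadcast their true (evolving) state, never leak information that distinguishes the low-world from the high-world; arranging their real states coherently across super-round boundaries as the rotating roles cycle; and verifying the witness count in both parities of $f$. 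Establishing this indistinguishability independently of how $\mathcal{A}$ encodes and carries its state is the technical heart of the proof. \fillbox
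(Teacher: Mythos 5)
Your proposal takes a genuinely different route from the paper---an FLM-style chain of indistinguishable executions spanning the \emph{entire} run---but it has a gap exactly where you yourself flag ``the technical heart'': you never establish that the designated node $p_k$ cannot distinguish $e_k$ from $e_{k+1}$ over arbitrarily many super-rounds. This is not a minor omission. Observe that your chain argument, as stated, uses the $2$-memory restriction only cosmetically (treating super-rounds as atomic units); if run-long indistinguishability could be sustained, the identical argument would rule out \emph{general} algorithms at $n=\lceil 7f/2\rceil$, a question the paper explicitly leaves open. The obstruction is real: your persistent set $S_p$ must lie in every round without ever confessing, and the rotating nodes of $R_{\mathrm{in}}$ re-enter honest service carrying states the adversary touched; an algorithm with unbounded memory can cross-correlate these accumulated inconsistencies over time, so the per-super-round witness count you borrow from Lemma~\ref{lemma:limit} does not by itself yield an invariant that survives the whole execution.

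The paper's proof sidesteps this by exploiting the $2$-memory hypothesis head-on, inside a \emph{single} super-round. It constructs (for $f=2$, $n=7$) a view for each healthy node that is consistent with two different placements of the faults: in one placement the fault-free inputs are $\{m,m,m'\}$, and in the other they are all equal (all $m$ for one group of targets, all $m'$ for the other). Because a $2$-memory algorithm must commit to a new state now, Validity in both scenarios forces the committed value to be exactly $m$ (respectively exactly $m'$)---the singleton intersection of the two valid ranges. Hence no healthy node's state changes, and the adversary then relocates the faults so that the resulting configuration is isomorphic to the initial one and repeats indefinitely, defeating Convergence without any appeal to executions of unbounded indistinguishability. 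Your partition into $S_p$, $R_{\mathrm{in}}$, $R_{\mathrm{out}}$ and the count $|S_p|\le\lfloor f/2\rfloor$ are the right ingredients for generalizing the paper's $f=2$ instance, but the contradiction must come from Validity forcing a unique committed value each super-round, not from a chain of executions held indistinguishable across the whole run.
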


\begin{proof}
	Consider the case when $f=2$, and $n = 7$. Denote by the set of nodes $S = \{a, b, c, d, e, f, g\}$. For simplicity, assume that a node can be in the \textit{cured} phase in round $0$. Then, suppose in round $0$: $a, b$ are cured, $c, d$ are faulty, and $e, f, g$ are healthy. And, nodes $e, f$ has input $m$, and node $g$ has input $m'$, where $m' > m$ and $m' - m > \epsilon$.
	
	In round $0$, faulty nodes $c, d$ behave to nodes $a, e, f$ as if they have input $m$, and behave to nodes $b, g$ as if they have input $m'$. In the beginning of round $1$, the adversary moves the fault from node $d$ to node $e$; hence, $a, b, f, g$ are healthy, $c, e$ are faulty, $d$ is cured in round $1$. The new faulty node $e$ and the original faulty node $c$ behave in the following way (i) behave to nodes $a, d, f$ as if node $c, d, e$ have input $m$, (ii) behave to node $b, g$ as if nodes $c, d, e$ have input $m'$, and (iii) otherwise follow the algorithm specification.
	
	Now, from the perspective of node $f$, there are two scenarios:
	
	\begin{itemize}
		\item If nodes $c, d$ are faulty, then the fault-free inputs are $m, m, m'$, and
		\item If nodes $d, g$ are faulty, then the fault-free inputs are $m, m, m$, and
	\end{itemize}
	By assumption, node $e$ needs to update the state variable now and it could not distinguish from the two scenarios, since it cannot exchange more messages. Therefore, node $e$ must choose some value that satisfies the validity condition in \textit{both} scenarios, and the value is $m$.\footnote{There are other scenarios not discussed in the proof for brevity; however, $m$ is the only value works for each of the scenarios.} Therefore, in round $1$, the state variable at node $e$ remains $m$. We can show the same situation holds for node $a, d$.
	
	From the perspective of node $g$, there are also two scenarios:
	
	\begin{itemize}
		\item If nodes $c, d$ are faulty, then the fault-free inputs are $m, m, m'$, and
		\item If nodes $e, f$ are faulty, then the fault-free inputs are $m', m', m'$, and
	\end{itemize}	
	Then, node $g$ has to choose $m'$ to satisfy the validity condition in round $1$. Similarly, node $b$ has to choose $m'$.
	
	Then in round $2$, the adversary picks nodes $a, b$ to be faulty. Observe that this scenario is identical to round $0$: two cured nodes, two faulty nodes, and three healthy nodes with state variables $m, m,$ and $m'$. Therefore, the adversary can behave in the same way so that no healthy node will change their state variables; hence, convergence cannot be achieved.	\fillbox
\end{proof}

\section{Conclusion}

Under Garay's mobile Byzantine fault model \cite{mobile_garay}, we present an approximate consensus algorithm that requires only $\lceil 7f/2\rceil + 1$ nodes, an $\lfloor f/2 \rfloor$ improvement over the state-of-the-art algorithms \cite{mobile_bonomi,mobile_bonomi2}. Moreover, we also show that the proposed algorithm is \textit{optimal} within the family of $2$-memory round-based algorithms. Whether $\lceil 7f/2\rceil + 1$ is tight for general approximate algorithms remains open.


\centerline{\Large\bf Appendices}

\appendix

\section{Proof of Lemma \ref{lemma:shrink}}
\label{a:lemma:shrink}

The claim proved in Lemma \ref{lemma:shrink} implies that $\max(O^t_i) \geq m$, and by the last step of the $\reduce$ function and the fact that all the extreme value proposed by faulty nodes are trimmed (by Lemma \ref{lemma:num}), we have $\min(O^t_i) \geq \m[t-1]$. Therefore,

\[
v_i[t+1] \geq \frac{m + \m[t-1]}{2}
\]

Similarly, we can show $\min(O^t_i) \leq m$, $\max(O^t_i) \geq \M[t-1]$, and

\[
v_j[t+1] \leq \frac{m + \M[t-1]}{2}
\] 

Now, we need to show that (\ref{eq:ij}) holds. 	Without loss of generality, assume that $v_j[t+1] \geq v_i[t+1]$. Then, we have

\begin{align*}
v_j[t+1] - v_i[t+1] &\leq  \frac{m + \M[t-1]}{2} - \frac{m + \m[t-1]}{2}\\
&=\frac{\M[t-1] - \m[t-1]}{2} 
\end{align*}

This completes the proof.

\end{document}